\documentclass{article}

\usepackage{graphicx}      
\usepackage{natbib}        

\usepackage{graphicx}
\usepackage{textcomp}
\usepackage{algorithm}
\usepackage{algorithmic}
\usepackage{xcolor}
\usepackage{mathtools}
\usepackage{xfrac}
\usepackage{comment}
\usepackage{pgfplots}
\pgfplotsset{compat=1.18}
\usepackage{pgfmath}
\usepackage{pgfplotstable}
\usetikzlibrary{calc,decorations.markings,intersections}
\usepgfplotslibrary{fillbetween,patchplots}
\usepackage{amsmath,amssymb,amsfonts}
\newcommand\scalemath[2]{\scalebox{#1}{\mbox{\ensuremath{\displaystyle #2}}}}
\usepackage{enumitem}

\DeclareMathOperator*{\emm}{\text{\fontencoding{U}\fontfamily{boondoxuprscr}\fontshape{n}\selectfont m}}

\newcommand{\f}{\mathsf{f}}
\newcommand{\ed}{\mathsf{e}}
\newcommand{\ve}{\mathsf{v}}
\newcommand{\R}{\mathbb{R}}

\newcommand{\defeq}{:=}
\newcommand{\tr}{\intercal}
\newcommand{\Poly}{{\mathcal P}}
\newcommand{\cvx}{{\mathrm{convh}}}

\newcommand{\vp}{{\mathrm{v}}}

\definecolor{sampathcolor}{HTML}{ff6f69}

\newcommand{\norm}[1]{\left\lVert#1\right\rVert}

\DeclareFontFamily{U}{boondoxuprscr}{\skewchar \font =45}
\DeclareFontShape{U}{boondoxuprscr}{m}{n}{
    <-> BOONDOXUprScr-Regular}{}
\DeclareFontShape{U}{boondoxuprscr}{b}{n}{
    <-> BOONDOXUprScr-Bold}{}

\newtheorem{theorem}{Theorem}

\newtheorem{lemma}{Lemma}
\newtheorem{remark}{Remark}
\newtheorem{assumption}{Assumption}
\newtheorem{proposition}{Proposition}

\newtheorem{definition}{Definition}
\newtheorem{pf}{Proof}

\title{On Piecewise Quadratic Terminal Costs for MPC}

\author{
	Sampath Kumar Mulagaleti\thanks{IMT School for Advanced Studies Lucca, Italy}
	\and
	Boris Houska\thanks{ShanghaiTech University, School of Information Science and Technology, China}
	\and
	Mario Zanon\footnotemark[1]
	\and
	Mario E.~Villanueva\footnotemark[1]
}
\begin{document}
	\maketitle
	

\begin{abstract}                
This paper presents a novel approach to synthesize stabilizing terminal ingredients for linear model predictive control (MPC) schemes, with the aim of increasing the region of attraction while reducing suboptimality with respect to the solution of the infinite-horizon optimal control problem. It is based on the construction of a novel terminal region using methods from the field of configuration-constrained polytopic computing, along with a terminal cost that is exactly equal to the infinite-horizon linear-quadratic regulator cost in a nontrivial neighborhood of the steady-state.
The practical performance of the controller is illustrated through various case studies, and comparisons with state-of-the-art approaches are presented.
\end{abstract}


\section{Introduction}

Model Predictive Control (MPC) is a widely used advanced control strategy that optimizes a sequence of control actions over a finite horizon~\citep{Rawlings2009}. At each sampling instant, MPC solves an optimization problem based on the current state, accounting for dynamics, constraints, and objectives over the prediction horizon. To guarantee recursive feasibility and stability, terminal conditions are typically required~\citep{Chen1998}; however, they are often omitted in practice~\citep{Qin2003}, with long horizons used instead~\citep{Gruene2009}. For systems with coupled slow and fast dynamic modes such as those encountered in process control, such horizons may become impractically long, making online optimization difficult. In such cases, it becomes important to design suitable terminal ingredients accounting for long-term behavior.

Terminal constraints for MPC were introduced in~\citet{Chen1998}, and substantial progress has since been made on terminal ingredients and their properties~\citep{Rawlings2009}. A key requirement for terminal regions is control invariance to ensure recursive feasibility and methods to compute such sets are well developed in set-theoretic control~\citep{Blanchini2008}. The terminal region shapes the MPC admissible set, and an ideal choice makes it approximate the maximal control invariant set as tightly as possible with a short prediction horizon.
The terminal cost is equally important: it should be a control Lyapunov function (CLF) for the stage cost. CLFs were introduced in~\citet{Zubov1965}, with foundational developments in~\citet{Artstein1983} and an overview in~\citet{Giesl2015}. The terminal cost influences MPC closed-loop behavior, and an ideal choice bounds the infinite-horizon cost as tightly as possible.

%
In linear-quadratic MPC, a standard choice of terminal ingredients uses the infinite-horizon unconstrained OCP to obtain a linear feedback law with a quadratic value function. The terminal region is then set as the maximal admissible control invariant set for this feedback, and the value-function defines a quadratic terminal cost. While this ensures stability, it can yield overly small admissible regions~\citep{johansson2024}. This motivates using more general invariant terminal regions, raising the problem of designing CLFs that can serve as terminal cost. For gauge-based stage costs, CLFs are synthesized in~\citep{Rakovic2012_gauge,Darup2015}. With quadratic stage costs,~\citep{Grammatico2013} proposed a polyhedral CLF over a contractive set, and~\citep{johansson2024} similarly exploits piecewise linear invariance to construct a CLF parameterized as a common quadratic Lyapunov function.



\indent
\textit{Contributions:} 
We revisit the computation of invariant terminal regions and piecewise quadratic terminal costs for MPC using recent advances in polytopic computing~\citep{Villanueva2024}. Leveraging configuration-constrained (cc-)polytopes, which admit a joint facet-vertex representation, we derive results enabling novel designs of terminal ingredients for linear-quadratic MPC. The main contributions of this paper are:
\begin{enumerate}
\item A new class of control invariant sets containing, under suitable assumptions, the union of all $\beta$-contractive cc-polytopes; see Lemma~\ref{lem:T_beta}.

\item A new class of piecewise quadratic CLFs, defined on the above terminal region; see Theorem~\ref{thm::clf}.
\end{enumerate}

Numerical examples show that the resulting MPC scheme outperforms \citep{Grammatico2013} and \citep{johansson2024} in admissible-region size and closed-loop optimality, and it explicitly characterizes a nontrivial region where it reduces to the linear control law when constraints are inactive.

\indent
\textit{Outline:} Section~\ref{sec::mpc} introduces the MPC problem and reviews control invariant sets and CLFs. Section~\ref{sec:polytopic_regions} recalls cc-polytopes, proposes a new control invariant set, and analyses its properties. Section~\ref{sec::Cost} defines a piecewise quadratic CLF over this set and studies its stability properties. Section~\ref{sec:complete_formulation} presents the MPC formulation, establishes recursive feasibility and stability under the proposed terminal ingredients, and analyzes computational complexity. Finally, Section~\ref{sec:numerics} presents two numerical examples and compares the proposed method with benchmark approaches.

\textit{Notation:} For a given matrix $Q \in \mathbb{R}^{n \times n}$, we use the shorthand $\| x \|_Q^2 := x^\tr Q x$. A discussion of conditions on the matrix $F$ under which $\Poly(y) := \{ x \in \mathbb{R}^n \mid F x \leq y \}$ is a polytope is found in~\citep[Section~2.3]{Houska2025}. Given compact convex sets $Z_1,Z_2 \subseteq \R^{n_x}$ with $Z_1 \subseteq Z_2$,  $d(Z_1;Z_2):=\min\{\epsilon \geq 0 : Z_2 \subseteq Z_1 \oplus \epsilon \mathcal{B}^{n_x}_{\infty}\}$ denotes their Hausdorff distance, with $\mathcal{B}^{n_x}_{\infty}$ the $\infty$-norm ball in $\R^{n_x}$. 

\section{Model Predictive Control}
\label{sec::mpc}
This paper concerns linear-quadratic infinite-horizon optimal control problems of the form
\begin{equation}
\label{eq::ocp}
\hspace{-5pt}
V_\infty(x_0) = \min_{x,u} \sum_{k=0}^\infty \ell(x_k,u_k)  \ \text{s.t.} 
\left\{
\scalemath{0.95}{\begin{array}{l}
k\in\mathbb{N},\\
x_{k+1} = Ax_{k} + Bu_{k},\\
x_k \in \mathbb X, \ u_k \in \mathbb U.
\end{array}}
\right.
\end{equation}
Here, $\mathbb X \subseteq \mathbb{R}^{n_x}$ and $\mathbb U \subseteq \mathbb{R}^{n_u}$ denote state- and control constraint sets, $A$ and $B$ are given system matrices, and
\begin{align}
\label{eq::stage_cost}
\ell(x,u)= x^\tr Q x + 2 x^\tr S u + u^\tr R u
\end{align}
is a given quadratic stage cost. We denote our optimization variables consisting of the state and control sequences as $x = (x_1,x_2,\ldots)$ and $u = (u_0,u_1,\ldots)$, with $x_0$ a given initial state vector that is not optimized. We work with the following assumption; see~\citep{Rawlings2017}.

\begin{assumption}
\label{ass::sets}
The sets $\mathbb X$ and $\mathbb U$ are closed, convex and we have $0 \in \mathbb X$ as well as $0 \in \mathbb U$.
\end{assumption}

\begin{assumption}
\label{ass::weights}
The matrix $\left(\begin{array}{cc}
Q & S \\
S^\tr & R
\end{array}\right)$
is positive semi-definite, $R$ is positive definite, and $(A,Q)$ is observable~\citep{Kalman1960,Rawlings2017}.
\end{assumption}

\subsection{Linear Quadratic Regulator}
Throughout this article we make use of the algebraic Riccati equation~\citep{Kalman1960}, which has the form
\begin{align}
P &= A^\tr P A + Q \label{eq::DARE} \\
& \qquad - ( A^\tr P B + S ) (B^\tr P B + R)^{-1} ( A^\tr P B + S )^\tr \notag \\
0 &= (R+B^{\tr}PB)K + (A^{\tr}PB + S)^\tr. \nonumber
\end{align}

\begin{assumption}
\label{ass::LQR}
We assume that~\eqref{eq::DARE} admits a solution $(P,K)$ with $P$ being symmetric and positive definite.
\end{assumption}

\noindent
If Assumption~\ref{ass::weights} holds, $(A,B)$ is stabilizable if and only if Assumption~\ref{ass::LQR} is satisfied~\citep{Kalman1960,Zanon2022}. Moreover, if $\mathbb X = \mathbb R^{n_x}$ and $\mathbb U = \mathbb R^{n_u}$, the optimal value of~\eqref{eq::ocp} is $V_\infty(x_0)=x_0^\tr P x_0$ with 
\begin{align}
    \label{eq:LQR_CL}
    \forall k\in\mathbb{N}, \quad u_k=Kx_k, \quad x_{k+1} = (A+BK)x_{k}.
\end{align}
The corresponding LTI control law, $\mu_\infty(x) = Kx$, is known as the linear-quadratic regulator (LQR).

We work with the following definition of control invariance.
\begin{definition}
A set $X \subseteq \mathbb X$ is called control invariant if
\begin{align}
\label{eq::CIdef}
\forall x \in X, \exists u \in \mathbb U: \quad Ax+Bu \in X.
\end{align}
\end{definition}
A set $\mathbb{T}_{\sf LQR} \subseteq \mathbb X$ is control invariant for $u = Kx$ if
\begin{align}
\label{eq::TLQR}
(A+BK)\mathbb{T}_{\sf LQR} \subseteq \mathbb{T}_{\sf LQR} \quad \text{and} \quad K\mathbb{T}_{\sf LQR}\subseteq \mathbb{U}.
\end{align}
If~\eqref{eq::TLQR} holds, \eqref{eq::CIdef} is satisfied with $u = Kx, X = \mathbb T_{\sf LQR}$. If Assumptions~\ref{ass::sets},~\ref{ass::weights}, and~\ref{ass::LQR} hold, $\mathbb{T}_{\sf LQR} = \{ 0 \}$ satisfies~\eqref{eq::TLQR}. Every candidate $\mathbb{T}_{\mathrm{LQR}}$ satisfies  $0 \in \mathbb{T}_{\sf LQR}$ since Assumption~\ref{ass::LQR} ensures $\scalemath{0.95}{\{ 0 \} = \underset{{k \to \infty}} {\lim}(A+BK)^k \cdot \mathbb{T}_{\sf LQR} \ \subseteq \ \mathbb{T}_{\sf LQR}}$.
Methods to compute $\mathbb{T}_{\sf LQR}$ can be found in~\citep{Houska2025}.

\subsection{Model Predictive Control}
\label{sec:refer_back}
Since an explicit solution of the infinite horizon OCP~\eqref{eq::ocp} is in general not available, the main idea of MPC is to solve
\begin{equation}\label{eq:mpc}
    \begin{alignedat}{2}
        V_N(x_0) \ = \ &\!\min_{x,u} \ &&\sum^{N-1}_{k=0} \ell(x_k,u_k) + \emm(x_N) \\
        & \ \text{s.t.}  &&\left\{ 
            \begin{aligned}
                &\forall k\in\{0,1,\ldots,N-1\},\\
                & x_{k+1} = Ax_{k} + Bu_{k},\\
                & x_{k}\in\mathbb{X}, \ u_{k}\in\mathbb{U}
            \end{aligned}
        \right.
    \end{alignedat}
\end{equation}
instead. Here, $x = (x_1,\ldots,x_N)$ and $u = (u_0,\ldots,u_{N-1})$ are optimization variables, and $N$ the prediction horizon. The terminal cost $\emm: \mathbb R^{n_x} \to \mathbb R \cup \{ \infty \}$ is assumed a convex positive definite CLF
satisfying the Lyapunov condition
\begin{align}
\label{eq::Lyapunov}
\hspace{-5pt}
\scalemath{0.965}{\forall x \in \mathbb R^{n_x}, \min_{u \in \mathbb U} \left( \ell(x,u) + I_{\mathbb X}(x) + \emm(Ax+Bu) \right) \leq \emm(x),}    
\end{align}
where $I_{\mathbb X}$ is the indicator function of the set $\mathbb{X}$.
If Assumptions~\ref{ass::sets} and~\ref{ass::weights} hold, \eqref{eq::Lyapunov} ensures that $V_N \geq V_\infty$ is a CLF~\citep{Rawlings2017}. Consequently,~\eqref{eq:mpc} defines a strictly stabilizing and recursively feasible control law $\mu_N(x_0) = u_0^\star(x_0)$, where $u_0^\star(\cdot)$ denotes the first element of the optimal control inputs of~\eqref{eq:mpc}. If Assumptions~\ref{ass::sets},~\ref{ass::weights}, and~\ref{ass::LQR} hold and $\mathbb T_{\sf LQR}$ satisfies~\eqref{eq::TLQR}, then the function
\begin{align}
\label{eq::emmLQR}
\emm(x) = \left\{
\begin{array}{ll}
x^\tr P x & \text{if} \ x \in \mathbb T_\mathrm{LQR} \\
\infty & \text{otherwise}
\end{array}
\right.
\end{align}
satisfies~\eqref{eq::Lyapunov}. 
The set $\{ x\in\mathbb{R}^
{n_x} \, | \, \emm(x) < \infty \}$ is the terminal region, and $\mathcal{O}_N =
\{ x\in\mathbb{R}^
{n_x} \, | \, V_{N}(x) < \infty \}$ is the admissible region. The nature of $\mathcal{O}_N$ depends on the terminal region and $N$.  
In this paper, we propose a new CLF that satisfies the following properties:
\begin{enumerate}[label=(\alph*)]
    \item It results in a larger admissible set than that of Problem \eqref{eq:mpc} when formulated with terminal cost \eqref{eq::emmLQR}. 
    \item The formulation \eqref{eq:mpc} constitutes a convex optimization problem; and
    \item There exists a non-singleton set containing the origin where $\mu_N(x_0) = Kx_0$, i.e., the control law reduces to the LQR control law when constraints are not active.
\end{enumerate}
{\color{black} \textbf{Motivating example.} To motivate the development of terminal ingredients that increase the domain of attraction of MPC schemes, consider a simple scalar system with dynamics $x^+=ax+u$, where $a>0$, $u \in [-\bar{u},\bar{u}]$, and no state constraints. Defining the stage cost as $\ell(x,u)=x^2+u^2$, the stabilizing feedback gain satisfying \eqref{eq::DARE} can be derived as $K(a) = -a(a^2+\sqrt{a^4+4})/(2+a^2+\sqrt{a^4+4})$. The maximal positive invariant set under the LQR law $u=K(a)x$ is then given by $\mathbb{T}_{\mathrm{LQR}}=[-b(a),b(a)]$, where $b(a)=\bar{u}(2+a^2+\sqrt{a^4+4})/(a(a^2+\sqrt{a^4+4}))$. Observe that $b'(a)<0$ for all $a>0$, implying that the set $\mathbb{T}_{\mathrm{LQR}}$ shrinks as $a$ increases. In process control applications, \textit{slow} dynamical systems are frequently encountered \citep{seborg2010process}; these correspond to values of $a \approxeq 1$, and sometimes $a>1$ (unstable). For such systems, terminal sets constructed as in \eqref{eq::emmLQR} can be exceedingly small. Then, obtaining a reasonably large admissible region $\mathcal{O}_N$ may require an excessively long horizon $N$. Recall that $\mathcal{O}_N$, the $N$-step backward reachable set from the terminal set, is given by $\mathcal{O}_N=[-c_N(a),c_N(a)]$, where $c_N(a)=(b(a)+\sum_{i=0}^{N-1}a^i\bar{u})/a^N$. It can be verified that $c_{N+1}(a)-c_N(a)=(\bar{u}-(a-1)b(a))/a^{N+1}>0$. Because the per-step growth of the admissible set decays proportionally to $a^{-N}$, increasing the horizon length yields diminishing returns for $a \gtrsim 1$. This further motivates enlarging the terminal set itself.}

\section{Polytopic Terminal Regions}
\label{sec:polytopic_regions}
This section revisits recent ideas from~\citep{Villanueva2024} on configuration-constrained polytopic computing and elaborates on how these developments can be used to derive new classes of terminal regions for MPC.

\subsection{Configuration-Constrained Polytopes}
The idea of many polyhedral modeling and computing methods~\citep{Houska2025} is to select a facet matrix $F\in\mathbb{R}^{{\sf f} \times n_x}$ and consider parametric sets of the form
\begin{equation}
\label{eq:template_polytope}
\Poly(y) = \left\{ x\in\mathbb{R}^{n_x} \middle| 
Fx\leq y \right\}.
\end{equation}
Under the assumption that $\Poly(0)=\{0\}$, $\mathcal{P}(y)$ is a (potentially empty) polytope for every $y \in \R^{\sf f}$~\citep{Ziegler1995}.
In general, however, the number of vertices of $\mathcal P(y)$ depends on the choice of $y$ and $F$,
which hinders the development of polytopic computing methods that need access to both facet and vertex representations. Nevertheless, as shown in~\citep[Theorem~2]{Villanueva2024}, one can determine an edge matrix $E\in\mathbb{R}^{{\sf e}\times{\sf f}}$ and a collection of vertex matrices $V = (V_1,\ldots,V_{\sf v})$, with $V_i \in \mathbb{R}^{n_x\times{\sf f}}$, such that $\mathcal P(y)$ is equal to the convex hull of the points $V_1 y, V_2 y, \ldots V_{\sf v} y$ if and only if the parameter $y \in \mathbb R^{\sf y}$ satisfies $E y \leq 0$. As the construction of such triples $(F,E,V)$ is already discussed in~\citep[Section~3.5]{Villanueva2024} and in even more detail in~\citep[Sections~2.12 and~2.13]{Houska2025}, we directly work with the following assumption.
\begin{assumption}
\label{ass::FEV}
The triple $\scalemath{0.9}{(F,E,V)}$ satisfies
\begin{equation}
\label{eq:cc_relation}
\mathcal P(y) = \cvx(\{V_1y,\ldots,V_{\sf v}y\}) \quad \Longleftrightarrow \quad Ey\leq 0,
\end{equation}
where $\cvx$ denotes the convex hull.
\end{assumption}
\noindent
As proposed in~\citep{Villanueva2024}, we call the constraint $E y \leq 0$ a configuration constraint, and refer to the polytopes $\Poly(y)$ that comply with this constraint as configuration-constrained polytopes (cc-polytopes).

We now characterize vectors $y \in \mathbb{R}^{\mathsf{f}}$ satisfying $Ey \leq 0$ such that $X = \mathcal{P}(y)$ is a control invariant set, that is, such that it fulfills \eqref{eq::CIdef}. This characterization relies on the set
\begin{align}
    \mathbb{S} := \left\{ (y, v, y^+) \;\middle|\; 
    \begin{array}{l}
        \forall i \in \{1, \ldots, \mathsf{v}\}, \\[5pt]
        F(AV_i y + B v_i) \leq y^+, \\[5pt]
        Ey \leq 0, \ V_i y \in \mathbb{X}, \ v_i \in \mathbb{U}
    \end{array}
    \right\},
\end{align}
where we introduce the shorthand $v = (v_1, \ldots, v_{\mathsf{v}}) \in \mathbb{R}^{\mathsf{v} n_u}$.
\begin{proposition}
\label{prop::FCI}
Let Assumptions~\ref{ass::sets} and~\ref{ass::FEV} hold. Then, there exists $(y, v, y^+) \in \mathbb{S}$ if and only if for every $x \in \Poly(y)$, there exists a control $u \in \mathbb{U}$ such that $Ax + Bu \in \Poly(y^+)$.
\end{proposition}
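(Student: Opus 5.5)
The plan is to read the statement with $y$ and $y^+$ fixed, where $y$ satisfies the configuration constraint $Ey\leq 0$ and the state constraint $V_i y\in\mathbb X$ for all $i$. The claim then becomes: there exists $v=(v_1,\ldots,v_{\sf v})$ with $(y,v,y^+)\in\mathbb S$ if and only if every $x\in\Poly(y)$ admits some $u\in\mathbb U$ with $Ax+Bu\in\Poly(y^+)$. The two membership conditions $Ey\leq 0$ and $V_iy\in\mathbb X$ are not produced by either direction, so they are taken as standing hypotheses on $y$. By Assumption~\ref{ass::FEV} and convexity of $\mathbb X$ (Assumption~\ref{ass::sets}), the second condition is equivalent to $\Poly(y)\subseteq\mathbb X$. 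The whole argument rests on Assumption~\ref{ass::FEV}: since $Ey\leq 0$, we have $\Poly(y)=\cvx(\{V_1y,\ldots,V_{\sf v}y\})$. In particular $\Poly(y)$ is nonempty, and each $V_iy$ lies in it.

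For the direction ``$\Rightarrow$'', I will take $(y,v,y^+)\in\mathbb S$ and an arbitrary $x\in\Poly(y)$. By the vertex representation, $x=\sum_i\lambda_iV_iy$ with $\lambda$ in the unit simplex. I will define the control by the same convex combination, $u:=\sum_i\lambda_i v_i$, which lies in $\mathbb U$ because $\mathbb U$ is convex. Linearity then gives
$F(Ax+Bu)=\sum_i\lambda_i F(AV_iy+Bv_i)\leq \sum_i\lambda_i y^+=y^+$, so $Ax+Bu\in\Poly(y^+)$. This is exactly the vertex-control interpolation argument. The only facts it uses are the convex-hull representation and convexity of $\mathbb U$.

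For the direction ``$\Leftarrow$'', I will apply the hypothesis to the particular points $x=V_iy$. Each of these lies in $\Poly(y)$, again by Assumption~\ref{ass::FEV}. The hypothesis supplies $v_i\in\mathbb U$ with $A V_iy+Bv_i\in\Poly(y^+)$, which is the inequality $F(AV_iy+Bv_i)\leq y^+$. Collecting $v=(v_1,\ldots,v_{\sf v})$, the remaining conditions $Ey\leq 0$ and $V_iy\in\mathbb X$ hold by the standing hypotheses on $y$. Hence $(y,v,y^+)\in\mathbb S$.

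Neither direction involves a hard computation. The main obstacle is getting the quantifiers and standing hypotheses right, because the statement as written is ambiguous about which of $y,y^+$ are fixed. Without the configuration constraint, the vertices $V_iy$ need not generate $\Poly(y)$. Without $\Poly(y)\subseteq\mathbb X$, the converse cannot recover $V_iy\in\mathbb X$. I would therefore state explicitly that both directions are meant for $y$ satisfying $Ey\leq0$ and $\Poly(y)\subseteq\mathbb X$, and that only $v$ is existentially quantified. Under this reading, both directions are immediate consequences of Assumption~\ref{ass::FEV}, linearity of the dynamics, and convexity of $\mathbb U$ and $\mathbb X$.
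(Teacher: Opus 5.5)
Your proof is correct and is essentially the argument behind the paper's proof, which consists only of a citation to \citep[Corollary~4]{Villanueva2024}; the paper itself uses your interpolation $x=\sum_i\lambda_iV_iy$, $u=\sum_i\lambda_iv_i$ in the sentence following the proposition. You are right to treat $Ey\leq 0$ and $V_iy\in\mathbb X$ as standing hypotheses on $y$, with only $v$ existentially quantified, and this is necessary rather than a convenience: if $\Poly(y)=\emptyset$, the right-hand side holds vacuously while Assumption~\ref{ass::FEV} forces $Ey\not\leq 0$, so the literal ``$\Leftarrow$'' direction would fail.
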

\begin{pf}
The statement follows from~\citep[Corollary~4]{Villanueva2024}; see also~\citep[Sect.~3]{Houska2025}.
\end{pf}
Due to Proposition \ref{prop::FCI}, for every $x \in \mathcal{P}(y)$, there exists a nonnegative vector $\lambda \in \R^{\sf v}$ satisfying $x = \sum_{i=1}^{\vp} \lambda_i V_i y$ and $\norm{\lambda}_1=1$ such that $Ax+Bu \in \mathcal{P}(y^+)$ holds with $u= \sum_{i=1}^{\vp} \lambda_i v_i \in \mathbb{U}$. Next, it follows from the vertex control theorem \citep{Gutman1986} that $\Poly(y_\mathrm{s})$ is a control invariant cc-polytope if and only if there exists $v \in \R^{\ve n_u}$ satisfying $(y_\mathrm{s},v,y_\mathrm{s}) \in \mathbb{S}$. We parameterize the vertex control inputs as $v_i=KV_iy$ for $i \in \{1,\cdots,\ve\}$, where $K$ denotes the LQR feedback gain, and define
\begin{align*}
    \mathbb Y_{\sf LQR}:=\left\{ y_{\sf s} \in \R^{\f} \ \middle|  \begin{array}{l}  \forall \ i \in \{1,\cdots,\ve\}, \vspace{5pt} \\
    F(A+BK) V_i y_{\sf s} \leq y_{\sf s}, \vspace{5pt}\\ 
    Ey \leq 0, \ V_i y_{\sf s} \in \mathbb{X}, \ KV_i y_{\sf s} \in \mathbb{U}
    \end{array} \right\}.
\end{align*}
This ensures that $\mathbb{T}_{\mathrm{LQR}} = \mathcal{P}(y_{\sf s})$ satisfies~\eqref{eq::emmLQR} for any $y_{\sf s} \in \mathbb{Y}_{\sf LQR}$, such that $\mathcal{P}(y_{\sf s})$ is an invariant cc-polytope with $u = Kx$.
A straightforward implementation of terminal regions for MPC using cc-polytopes involves selecting any $y_{\sf s} \in \mathbb{Y}_{\sf LQR}$ offline and setting $\mathbb{T}_{\sf LQR} = \Poly(y_{\sf s})$ in~\eqref{eq::emmLQR}. Alternatively, a  $\mathbb{T}_{\sf LQR}$ can be computed online via
\begin{align}
\label{eq::emmLQRopt}
\emm(x) = \min_{y_{\sf s}} \ x^\top P x \quad \text{s.t.} \quad 
y_{\sf s} \in \mathbb{Y}_{\sf LQR}, \quad
F x \leq y_{\sf s},
\end{align}
since any feasible $y_{\sf s}$ in~\eqref{eq::emmLQRopt} defines a valid terminal region, i.e., $X=\mathcal{P}(y_{\sf s})$ satisfies \eqref{eq::CIdef}. While this introduces an additional optimization variable $y_{\sf s}$, its advantage lies in having a domain at least as large as that of~\eqref{eq::emmLQR} for a pre-computed cc-polytopic terminal region. Note that~\eqref{eq::emmLQRopt} is equivalent to defining the terminal region as
\begin{align}
\label{eq::hatTLQR}
\widehat{\mathbb{T}}_{\sf LQR} = \{ x \in \mathbb{R}^{n_x} \mid \exists y_{\sf s} \in \mathbb{Y}_{\sf LQR}: Fx \leq y_{\sf s} \}.
\end{align}
The set $\widehat{\mathbb{T}}_{\sf LQR}$ is generally not a cc-polytope. Instead, given $(F,E,V)$, it is the union of all cc-polytopes satisfying~\eqref{eq::emmLQR}.
\subsection{Novel Representation of a Control Invariant Set}
We introduce a novel control invariant set, used to define the terminal cost $\emm$ in the next section. We define it as
\begin{align}
\label{eq::Tbeta}
\mathbb{T}(\beta) := \left\{ x \in \R^{n_x} \middle| \begin{array}{l} \exists y,y_{\sf s} \in \R^\f, \exists v \in \R^{\ve n_u}:\\
Fx \leq y, \ y_{\sf s} \in \mathbb Y_{\sf LQR}, \\ 
(y,v,y_{\sf s}+\beta(y-y_{\sf s})) \in \mathbb{S}
\end{array}
\right\}.
\end{align}
\begin{lemma}
\label{lem:T_beta}
Let Assumptions~\ref{ass::sets},~\ref{ass::weights},~\ref{ass::LQR}, and~\ref{ass::FEV} be satisfied. Then the following statements hold for all $\beta \in [0,1]$:
\begin{enumerate}
\item The set $\mathbb T(\beta)$ is closed, convex, and control invariant.
\item If $\mathbb X$ and $\mathbb U$ are polyhedra, then $\mathbb T(\beta)$ is a polyhedron.
\item We have $\mathbb T(\beta) \supseteq \widehat{\mathbb T}_{\sf LQR}$, where $\widehat{\mathbb T}_{\sf LQR}$ is defined in~\eqref{eq::hatTLQR}.
\end{enumerate}
\end{lemma}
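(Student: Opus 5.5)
We will treat $\mathbb T(\beta)$ as the projection onto the $x$-coordinates of the set
\[
\mathcal{G}(\beta) := \{(x,y,y_{\sf s},v) \mid Fx\le y,\ y_{\sf s}\in\mathbb Y_{\sf LQR},\ (y,v,y_{\sf s}+\beta(y-y_{\sf s}))\in\mathbb S\}.
\]
All constraints defining $\mathcal G(\beta)$ are affine in $(x,y,y_{\sf s},v)$ for fixed $\beta$, or they are memberships of affine images in the closed convex sets $\mathbb X,\mathbb U$. Hence $\mathcal G(\beta)$ is closed and convex, and convexity of $\mathbb T(\beta)$ follows because projection preserves convexity.

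When $\mathbb X,\mathbb U$ are polyhedra, $\mathcal G(\beta)$ is a polyhedron, and Fourier--Motzkin elimination (projections of polyhedra are polyhedra) gives statement~2. For closedness in the general case we cannot simply project, since projections of closed sets need not be closed. Our plan is to show that the fiber variables can be confined to a bounded set, so that the projection is that of a compact slice. The coordinate $x$ ranges over $\mathcal P(y)\subseteq\mathbb X$, but $\mathbb X$ may be unbounded, so instead we take a convergent sequence $x_k\to x$ with witnesses $(y_k,y_{{\sf s},k},v_k)$ and bound the witnesses directly.

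The first option is to replace $y_k$ by the tightest admissible parameter. The second, which we will try first, uses the property guaranteed by Assumption~\ref{ass::FEV} with $\mathcal P(0)=\{0\}$: on the cone $\{Ey\le0\}$ the map $y\mapsto \mathcal P(y)$ is injective up to facet redundancy, and $y$ is recovered from the vertices via $y_j=\max_i F_jV_iy$. Boundedness of $V_iy_k$ would then bound $y_k$. This needs the vertices $V_iy_k$ to stay bounded, which does not follow from $x_k$ alone.

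We therefore expect closedness to be the main obstacle. If it cannot be settled directly, we would fall back on compactness of $\mathbb X$ (which makes $\mathbb S$ bounded in $y$ through $V_iy\in\mathbb X$ and the configuration constraint) and note the assumption. In the polyhedral case closedness is automatic.

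Control invariance, the remainder of statement~1, goes as follows. Take $x\in\mathbb T(\beta)$ with witnesses $y,y_{\sf s},v$. By Proposition~\ref{prop::FCI} there is $u\in\mathbb U$ with $x^+=Ax+Bu\in\mathcal P(y^+)$, where $y^+=y_{\sf s}+\beta(y-y_{\sf s})=\beta y+(1-\beta)y_{\sf s}$. Also $x\in\mathcal P(y)$, and $\mathcal P(y)=\cvx\{V_iy\}\subseteq\mathbb X$ by convexity of $\mathbb X$, so $\mathbb T(\beta)\subseteq\mathbb X$.

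It remains to exhibit witnesses for $x^+$. We take $y'=y^+$, keep the same $y_{\sf s}$, and set $v'=\beta v+(1-\beta)(KV_iy_{\sf s})_i$. Then $Ey'\le0$ by convexity of the cone. The vertex constraints hold by convex combination, using $V_iy'=\beta V_iy+(1-\beta)V_iy_{\sf s}\in\mathbb X$ and $v'_i\in\mathbb U$. For the dynamics constraint, linearity gives
\[
F(AV_iy'+Bv'_i)\le \beta\, y^+ +(1-\beta)\, y_{\sf s} = y_{\sf s}+\beta(y^+-y_{\sf s}),
\]
where we used $F(A+BK)V_iy_{\sf s}\le y_{\sf s}$. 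Thus $(y',v',y_{\sf s}+\beta(y'-y_{\sf s}))\in\mathbb S$ and $x^+\in\mathbb T(\beta)$.

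For statement~3, let $x\in\widehat{\mathbb T}_{\sf LQR}$ with $Fx\le y_{\sf s}$ and $y_{\sf s}\in\mathbb Y_{\sf LQR}$. We choose $y=y_{\sf s}$ and $v_i=KV_iy_{\sf s}$. Then $y_{\sf s}+\beta(y-y_{\sf s})=y_{\sf s}$, and the membership $(y_{\sf s},v,y_{\sf s})\in\mathbb S$ is exactly the definition of $\mathbb Y_{\sf LQR}$. Hence $x\in\mathbb T(\beta)$.
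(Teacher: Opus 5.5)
Your arguments for convexity, statement~2, control invariance and statement~3 follow the paper's proof. Invariance comes from combining the witness $(y,v,y^+)$ with the LQR witness $(y_{\sf s},v_{\sf s},y_{\sf s})$ inside the convex set $\mathbb S$, and statement~3 from taking $y=y_{\sf s}$, $v=v_{\sf s}$. Your weights $y'=\beta y+(1-\beta)y_{\sf s}$ and $v'=\beta v+(1-\beta)v_{\sf s}$ are the correct ones. The paper's written proof instead forms $(1-\beta)(y,v,y^+)+\beta(y_{\sf s},v_{\sf s},y_{\sf s})$, whose first component is $(1-\beta)y+\beta y_{\sf s}\neq y^+$ in general; its proof of Theorem~\ref{thm::clf} uses your version. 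You also check $\mathbb T(\beta)\subseteq\mathbb X$, which the definition of control invariance requires and the paper leaves implicit.

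The gap is closedness, which you leave open. The paper does not fill it either: it only asserts that closedness follows from Assumption~\ref{ass::sets} and~\eqref{eq::Tbeta}. In fact closedness fails under the stated hypotheses. Your fallback (compact $\mathbb X$) does not rescue it, because the inputs $v$ are also projected out, and a closed convex unbounded $\mathbb U$ can have a non-closed image $B\mathbb U$.

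Here is a counterexample. Take $A=2$, $B=[0\ \ 1]$, $Q=1$, $S=0$, $R=I_2$, $\mathbb X=[-10,10]$, and $\mathbb U=\{u\in\R^2: u_1\ge -1,\ (1+u_1)(1+u_2)\ge 1\}$. This $\mathbb U$ is closed, convex and contains $0$, but $u_2>-1$ on $\mathbb U$ with infimum $-1$ not attained. Use the interval template $F=[1\ \ {-1}]^\top$, $E=[-1\ \ {-1}]$, $V_1=[1\ \ 0]$, $V_2=[0\ \ {-1}]$.
\begin{itemize}
\item The LQR gain is $K=[0\ \ {-\tfrac{1+\sqrt5}{2}}]^\top$, so $KV_1y_{\sf s}\in\mathbb U$ forces $(y_{\sf s})_1\le 0$.
\item If $x\ge 0$ lies in $\mathbb T(\beta)$, the first row of the vertex condition at $V_1y=y_1\ge x$ reads $2y_1+(v_1)_2\le\beta y_1+(1-\beta)(y_{\sf s})_1\le\beta y_1$. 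Hence $-1<(v_1)_2\le-(2-\beta)x$, so $x<1/(2-\beta)$.
\item Conversely, every $x\in[0,1/(2-\beta))$ belongs to $\mathbb T(\beta)$ via $y=(x,0)$, $y_{\sf s}=0$, $v_2=0$, $v_1=\bigl(\tfrac{1}{1-(2-\beta)x}-1,\,-(2-\beta)x\bigr)$.
\end{itemize}
So $\mathbb T(\beta)\cap[0,\infty)=[0,1/(2-\beta))$, which is not closed.

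Statement~1 therefore needs an extra hypothesis. One option is polyhedral $\mathbb X,\mathbb U$, where your Fourier--Motzkin argument gives closedness. Another is compactness of both $\mathbb X$ \emph{and} $\mathbb U$. In that case you do not need your vertex-recovery bound on $y$, which Assumption~\ref{ass::FEV} alone does not guarantee. Instead, introduce $p_i=V_iy$ and $q_i=V_iy_{\sf s}$. Every constraint other than $p_i,q_i\in\mathbb X$ and $v_i,Kq_i\in\mathbb U$ is linear, so eliminating $(y,y_{\sf s})$ leaves a polyhedron in $(x,v,p,q)$. Intersect it with the compact set of admissible $(v,p,q)$ and project along that compact factor; the result is closed.
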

\begin{pf} The fact that $\mathbb T(\beta)$ is closed and convex follows from Assumption~\ref{ass::sets} and~\eqref{eq::Tbeta}. Next, let $x \in \mathbb T(\beta)$ such that there exist $y \in \R^\f$, $y_{\sf s} \in \mathbb Y_{\sf LQR}$, and $v \in \R^{\vp n_u}$ with $x \in \Poly(y)$ and $(y,v,y^+) \in \mathbb S,$ where $y^+:=y_{\sf s} + \beta (y-y_{\sf s})$. Under Assumptions~\ref{ass::sets} and~\ref{ass::FEV}, Proposition~\ref{prop::FCI} guarantees there exists some $u \in \mathbb U$ such that $Ax + Bu \in \Poly(y^+)$.
Thus, if $\Poly(y^+) \subseteq \mathbb T(\beta)$ holds, then $\mathbb T(\beta)$ is control invariant. To this end, we introduce the auxiliary vertex inputs
\begin{align}
\label{eq::vs}
v_{\sf s} := \left( K V_1 y_{\sf s}, \ldots, K V_{\vp} y_{\sf s} \right),
\end{align}
satisfying $(y_{\sf s},v_{\sf s},y_{\sf s}) \in \mathbb S$ as $y_{\sf s} \in \mathbb Y_{\sf LQR}$. Thus, defining $v^+:=(1-\beta) v + \beta v_{\sf s}$ and $y^{++}:= (1-\beta)y^+ + \beta y_{\sf s}$, the convexity of $\mathbb{S}$ implies $(1-\beta)(y,v,y^+)+\beta (y_{\sf s},v_{\sf s},y_{\sf s}) \in \mathbb S,$ or, equivalently, $(y^+,v^+,y^{++}) \in \mathbb{S}$ for any $\beta \in [0,1]$. 
This implies that the triple $(y^+,v^+,y_{\sf s})$ satisfies all conditions in the definition of $\mathbb T(\beta)$, such that the inclusion $\Poly(y^+) \subseteq \mathbb T(\beta)$ follows. Hence, $\mathbb{T}(\beta)$ is a control invariant set.
The second statement follows directly from Assumption~\ref{ass::sets} and the construction in~\eqref{eq::Tbeta}. The third statement follows upon setting $y = y_{\sf s}$ and $v = v_{\sf s}$ in the definition of $\mathbb T(\beta)$ in~\eqref{eq::Tbeta}.
\end{pf}
\begin{remark}
\label{remark:include_bc}
Since we may set $y_{\sf s}=0$ in $\mathbb{T}(\beta)$, it includes all cc-polytopes satisfying the $\beta$-contractivity condition $(y,v,\beta y) \in \mathbb{S}$. Hence, $\mathbb{T}(\beta)$ is a superset of the union of all $\beta$-contractive cc-polytopes, such that the terminal region is larger than the maximal $\beta$-contractive cc-polytope.
\end{remark}
{\color{black}
\textbf{Motivating example \textit{(continued)}.} Returning to the example in Sec.~\ref{sec:refer_back}, we now demonstrate how employing $\mathbb{T}(\beta)$ as the terminal set enlarges the MPC domain of attraction. Let us define $\mathbb{T}(\beta)$ with $F=[1 \ -1]^{\top}$, such that $\sf{f}=\ve=2$ and $E=[-1 \ -1]$. Because the maximal control invariant set $\mathbb{T}_{\mathrm{LQR}}=[-b(a),b(a)]$ under $u=K(a)x$ is defined by the same hyperplanes, it follows that $\hat{\mathbb{T}}_{\mathrm{LQR}}=\mathbb{T}_{\mathrm{LQR}}$. By Lemma \ref{lem:T_beta}-(3), we have $\mathbb{T}_{\mathrm{LQR}} \subseteq \mathbb{T}(\beta)$. We now derive a lower bound on $\beta$ to ensure this inclusion holds strictly. Consider a $\beta$-contractive set $\mathbb{C}_{\beta}=[-m(a),m(a)]$, which satisfies $\forall x \in \mathbb{C}_{\beta}, \ \exists u \in [-\bar{u},\bar{u}] : ax+u \in \mathbb{C}_{\beta}$. For any $\beta<\min\{a,1\}$, $\mathbb{C}_{\beta}$ defined with $m(a)=\bar{u}/(a-\beta)$ is the maximal $\beta$-contractive set. From Remark \ref{remark:include_bc}, we know that $\mathbb{C}_{\beta} \subseteq \mathbb{T}(\beta)$. Consequently, selecting $\beta> 2a/(2+a^2+\sqrt{a^4+4})$ ensures $b(a) < m(a)$, or equivalently $\mathbb{T}_{\mathrm{LQR}} \subset \mathbb{C}_{\beta}$, meaning $\mathbb{T}_{\mathrm{LQR}} \subset \mathbb{T}(\beta)$ strictly holds. This lower bound on $\beta$ increases for $a \in [0,\sqrt{2}]$ and decreases for $a> \sqrt{2}$, reaching a maximum value of $1/(1+\sqrt{2})<1$. Note that if $a<1$ and $\beta \in [a,1]$, then $\mathbb{C}_{\beta}=\mathbb{R}$, satisfying $\mathbb{T}_{\mathrm{LQR}} \subset \mathbb{T}(\beta)$ by construction. Thus, for this scalar system, we guarantee a larger MPC admissible region for a given horizon length. Specifically, the admissible set $\mathcal{O}_N=[-c_N(a),c_N(a)]$ is bounded by $c_N(a)\geq (m(a)+\sum_{i=0}^{N-1}a^i\bar{u})/a^N$. While the preceding analysis considered a scalar system, the insight extends to multidimensional stiff systems. For a diagonalizable system with $A = T \operatorname{diag}(\lambda_1, \ldots, \lambda_{n_x}) T^{-1}$, the extent of $\mathbb{T}_{\mathrm{LQR}}$ along the $i$-th modal direction is governed by $|\lambda_i|$ and the effective control authority in that direction (equivalent of $\bar{u}$).
The mode with the largest $|\lambda_i|$ therefore constitutes the bottleneck for enlarging the admissible region via horizon extension, and the benefits of the enlarged terminal set $\mathbb{T}(\beta)$ apply directly to this limiting mode.
}


\section{Terminal Cost Functions}
\label{sec::Cost}
The main idea of this section is to introduce a terminal cost function $\emm:\mathbb{R}^{n_x}\to\mathbb{R}\cup\{\infty\}$, which is defined as
\begin{align}
\emm(x) := &\min_{y,v,y_{\sf s}} \ \Vert x \Vert_{P}^{2} + \Vert y-y_{\sf s} \Vert_{\Gamma}^2 + \sum^{\sf v}_{i=1}\Vert v_i - KV_iy \Vert_{\Theta}^2  \nonumber \\
& \ \ \  \text{s.t.} \
\begin{cases}
Fx \leq y, \ \ y_{\sf s} \in \mathbb Y_{\sf LQR},   \\
(y,v,y_{\sf s}+\beta(y-y_{\sf s})) \in \mathbb{S},
\end{cases} \label{eq::EMM}
\end{align}
where the matrices $\Gamma \in\mathbb{R}^{\f \times \f}$ and $\Theta \in\mathbb{R}^{n_{u}\times n_{u}}$ are positive semi-definite, such that~\eqref{eq::EMM} is a convex problem.
\begin{proposition}
\label{prop::emm}
Let Assumptions~\ref{ass::sets},~\ref{ass::weights},~\ref{ass::LQR}, and~\ref{ass::FEV} be satisfied, $\beta \in [0,1]$, and let $\Gamma$ and $\Theta$ be positive semi-definite. Then:
\begin{enumerate}
\item The function $\emm$ is a convex function.
\item We have $\emm(x)<\infty$ if and only if $x \in \mathbb T(\beta)$.
\item If $\mathbb X$ and $\mathbb U$ are polyhedra, then $\emm$ is a continuous and piecewise quadratic function on $\mathbb T(\beta)$.
\item We have $\emm(x) = x^\tr P x$ for all $x \in \widehat{\mathbb T}_{\sf LQR}$.
\end{enumerate}
\end{proposition}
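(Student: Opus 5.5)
We argue each item directly from the definition~\eqref{eq::EMM}, treating $\emm$ as the optimal value of a parametric convex program in which $x$ is the parameter.

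\emph{Item 1.} Let $\mathcal{G}$ be the set of all $(x,y,v,y_{\sf s})$ satisfying $Fx\le y$, $y_{\sf s}\in\mathbb Y_{\sf LQR}$ and $(y,v,y_{\sf s}+\beta(y-y_{\sf s}))\in\mathbb S$.
\begin{itemize}
\item $\mathbb S$ is convex by Assumption~\ref{ass::sets}, since its defining constraints are linear in $(y,v,y^+)$ intersected with preimages of the convex sets $\mathbb X$ and $\mathbb U$ under linear maps.
\item $\mathbb Y_{\sf LQR}$ is convex for the same reason.
\item The map $(y,y_{\sf s})\mapsto y_{\sf s}+\beta(y-y_{\sf s})$ is linear.
\end{itemize}
Hence $\mathcal{G}$ is convex. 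The objective is a sum of convex quadratics in $(x,y,v,y_{\sf s})$ because $P\succ0$ and $\Gamma,\Theta\succeq 0$. The partial minimization of a jointly convex function over a convex set is convex in the remaining variable $x$, with value $+\infty$ when infeasible. This gives convexity of $\emm$.

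\emph{Item 2.} The objective is finite whenever the constraints are feasible. So $\emm(x)<\infty$ exactly when the feasible set is nonempty for this $x$. That feasibility condition coincides verbatim with the defining condition of $\mathbb T(\beta)$ in~\eqref{eq::Tbeta}. One subtle point is that the infimum should be attained, but finiteness only needs feasibility, so this causes no difficulty.

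\emph{Item 3.} Assume $\mathbb X$ and $\mathbb U$ are polyhedra. Then $\mathcal G$ is a polyhedron and~\eqref{eq::EMM} is a parametric convex quadratic program with the parameter $x$ entering linearly in the constraints. The objective is bounded below by $0$. Attainment follows from the Frank--Wolfe theorem, since a convex quadratic that is bounded below on a polyhedron attains its minimum.

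We then invoke the standard multiparametric QP result (Bemporad et al.): the value function of such a program is convex, continuous, and piecewise quadratic on a polyhedral partition of the feasible parameter set, here $\mathbb T(\beta)$. The standard proofs assume a strictly convex Hessian, whereas here the Hessian in $(y,v,y_{\sf s})$ is only semidefinite.

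This degenerate case is the main obstacle. We would handle it with a two-step argument.
\begin{itemize}
\item \emph{Piecewise quadratic.} For each of the finitely many active sets, the KKT system is linear in $x$. On the critical region where that active set is optimal, the value is therefore a quadratic function of $x$, and these regions are finitely many polyhedra covering $\mathbb T(\beta)$.
\item \emph{Continuity.} A convex function that is piecewise quadratic on finitely many polyhedra is continuous on its polyhedral domain: it is upper semicontinuous there (a polyhedral-domain fact, Rockafellar Thm.~10.2) and lower semicontinuous from the closedness of parametric QP value functions.
\end{itemize}
If this route proves awkward, the fallback is to add a regularization $\epsilon(\|y\|^2+\|v\|^2+\|y_{\sf s}\|^2)$ and pass to the limit $\epsilon\to0$. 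We expect this to be messier.

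\emph{Item 4.} Let $x\in\widehat{\mathbb T}_{\sf LQR}$, so there is $y_{\sf s}\in\mathbb Y_{\sf LQR}$ with $Fx\le y_{\sf s}$. Choose $y=y_{\sf s}$ and $v=v_{\sf s}$ as in~\eqref{eq::vs}; this point is feasible by the proof of Lemma~\ref{lem:T_beta}(3), since $(y_{\sf s},v_{\sf s},y_{\sf s})\in\mathbb S$. At this point both penalty terms vanish, giving $\emm(x)\le x^\tr Px$. Conversely, the penalty terms are nonnegative and $x$ is fixed, so $\emm(x)\ge x^\tr Px$. Together these give equality.
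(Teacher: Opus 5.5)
Your proposal is correct and follows essentially the same route as the paper's proof: parametric convex and parametric QP theory for items 1 and 3, identification of the feasible set of~\eqref{eq::EMM} with the definition of $\mathbb T(\beta)$ for item 2, and, for item 4, the zero-penalty feasible point $(y_{\sf s},v_{\sf s},y_{\sf s})$ together with $\Gamma,\Theta\succeq 0$. Your extra care in item 3 about the Hessian being only semidefinite, which the paper's appeal to Bemporad et al.\ glosses over, is a sound refinement: for a fixed active set the KKT solution may be non-unique, but the objective is constant on that solution set, and continuity then follows directly because $\emm$ is quadratic on each of finitely many closed polyhedral regions covering $\mathbb T(\beta)$.
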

\begin{pf}
    See Appendix.
\end{pf}
 Proposition~\eqref{prop::emm} ensures that $\emm$ coincides with the optimal infinite horizon value function $V_\infty$ on $\widehat{\mathbb T}_{\sf LQR}$. Hence, it satisfies the Lyapunov descent condition~\eqref{eq::Lyapunov} on $\widehat{\mathbb T}_{\sf LQR}$. Next, we provide a simple condition on the parameter $\beta$ and the weight matrix $\Theta$ under which it satisfies~\eqref{eq::Lyapunov} on the much larger domain $\mathbb{T}(\beta)$. 

We now present a condition on the weight matrix $\Theta$ under which the function $\emm$ from~\eqref{eq::EMM} satisfies the Lyapunov condition~\eqref{eq::Lyapunov} for any given $\beta \in [0,1)$. 
\begin{theorem}\label{thm::clf}
Let Assumptions~\ref{ass::sets},~\ref{ass::weights},~\ref{ass::LQR}, and~\ref{ass::FEV} be satisfied, and assume that $\beta \in [0,1)$ and that the weight matrix $\Gamma$ is positive semi-definite. If the weight matrix $\Theta$ satisfies
\begin{align}
\label{eq::Tcond}
\Theta \succeq \frac{B^{\tr}PB + R}{1-\beta^2},
\end{align}
then the function $\emm$ in~\eqref{eq::EMM} satisfies the Lyapunov condition~\eqref{eq::Lyapunov} for the stage cost $\ell(x,u) = x^\tr Q x + 2x^\tr S u + u^\tr R u$.
\end{theorem}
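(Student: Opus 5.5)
The plan is to verify~\eqref{eq::Lyapunov} directly. Outside $\mathbb T(\beta)$ we have $\emm(x)=\infty$ by Proposition~\ref{prop::emm}, so there is nothing to show. For $x \in \mathbb T(\beta)$, take an optimal (or, if the minimum is not attained, $\epsilon$-optimal) triple $(y,v,y_{\sf s})$ of~\eqref{eq::EMM}. From it I will build an admissible input $u$ and a feasible point of~\eqref{eq::EMM} at the successor state $x^+=Ax+Bu$, and then bound the cost of that feasible point.

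\textbf{Input and state constraint.} Since $Ey\le 0$ and $Fx\le y$, Assumption~\ref{ass::FEV} gives $x=\sum_i\lambda_i V_i y$ with $\lambda\ge 0$ and $\|\lambda\|_1=1$. I set $u=\sum_i\lambda_i v_i$. Convexity of $\mathbb U$ gives $u\in\mathbb U$, and convexity of $\mathbb X$ together with $V_iy\in\mathbb X$ gives $x\in\mathbb X$, so $I_{\mathbb X}(x)=0$. Moreover, $Fx^+ \le \sum_i \lambda_i F(AV_iy+Bv_i)\le y^+$, where $y^+:=y_{\sf s}+\beta(y-y_{\sf s})$.

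\textbf{Feasible point at $x^+$.} Let $v_{\sf s}$ be as in~\eqref{eq::vs}, so that $(y_{\sf s},v_{\sf s},y_{\sf s})\in\mathbb S$. By convexity of $\mathbb S$,
\[
\beta(y,v,y^+)+(1-\beta)(y_{\sf s},v_{\sf s},y_{\sf s}) = \bigl(y^+,\,v^+,\,y_{\sf s}+\beta(y^+-y_{\sf s})\bigr)\in\mathbb S,
\]
with $v^+:=\beta v+(1-\beta)v_{\sf s}$. Hence $(y^+,v^+,y_{\sf s})$ is feasible for~\eqref{eq::EMM} at $x^+$. Its penalty terms scale exactly by $\beta^2$:
\[
y^+-y_{\sf s}=\beta(y-y_{\sf s}), \qquad v_i^+-KV_iy^+=\beta(v_i-KV_iy).
\]
The second identity follows because the $y_{\sf s}$ contributions cancel. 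Since $\beta^2\le 1$, it therefore suffices to show
\[
\ell(x,u)+\|x^+\|_P^2 \le \|x\|_P^2+(1-\beta^2)\sum_i\|v_i-KV_iy\|_\Theta^2 .
\]

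\textbf{Key step (the main obstacle).} I expect the completion-of-squares identity with the cross term $S$ to be the step requiring care. Write $u=Kx+w$ with $w=\sum_i\lambda_i(v_i-KV_iy)$, which holds because $Kx=\sum_i\lambda_iKV_iy$. Using both equations of~\eqref{eq::DARE}, I will verify
\[
\ell(x,Kx+w)+\|(A+BK)x+Bw\|_P^2=\|x\|_P^2+\|w\|_{R+B^\tr PB}^2 .
\]
The terms linear in $w$ vanish by the gain equation, and the terms in $x$ alone collapse to $\|x\|_P^2$ by the Riccati equation. Then, with $H:=R+B^\tr PB\succeq 0$, Jensen's inequality and $\lambda_i\le 1$ give
\[
\|w\|_H^2\le\sum_i\lambda_i\|v_i-KV_iy\|_H^2\le\sum_i\|v_i-KV_iy\|_H^2 .
\]
Condition~\eqref{eq::Tcond} then bounds the right-hand side by $(1-\beta^2)\sum_i\|v_i-KV_iy\|_\Theta^2$.

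\textbf{Conclusion.} Combining the three steps yields $\ell(x,u)+\emm(x^+)\le\emm(x)$, or $\le\emm(x)+\epsilon$ in the $\epsilon$-optimal case, after which we let $\epsilon\to 0$. This is exactly~\eqref{eq::Lyapunov}.
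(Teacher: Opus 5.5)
Your proposal is correct and follows the paper's proof essentially step for step. It uses the same shifted tube $(y^+,v^+,y_{\sf s})$ whose penalty terms scale by $\beta$, the same Riccati identity $\ell(x,u)+\|x^+\|_P^2=\|x\|_P^2+\|w\|_{B^\tr PB+R}^2$, and the same final use of~\eqref{eq::Tcond}. The differences are minor:
\begin{itemize}
\item You bound $\|w\|_{B^\tr PB+R}^2$ by Jensen's inequality instead of the trace inequality $\lambda^\tr\Delta\lambda\le\mathrm{Tr}(\Delta)\|\lambda\|_2^2$; both end at $\sum_i\|v_i-KV_iy\|_{B^\tr PB+R}^2$.
\item You add an $\epsilon$-optimality safeguard for the case where the minimum is not attained.
\item Your weights $\beta(y,v,y^+)+(1-\beta)(y_{\sf s},v_{\sf s},y_{\sf s})$ are the correct ones. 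The paper's proof of Lemma~\ref{lem:T_beta} writes them swapped, though its Part~II computation implicitly uses yours.
\end{itemize}
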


\begin{pf}
For any $x \notin \mathbb{X}$, we have that $x \notin \mathbb{T}(\beta)$, such that $\emm(x)=\infty$ from Proposition \ref{prop::emm}. Hence,~\eqref{eq::Lyapunov} holds for all $x$ with $\emm(x) = \infty$. It is thus sufficient to show that 
\begin{eqnarray}
\label{eq::SimpleLyapunov}
\forall x \in \mathbb{T}(\beta), \quad \min_{u \in \mathbb U} \ \ell(x,u) + \emm(Ax+Bu) \leq \emm(x).
\end{eqnarray}
\noindent
\textit{Part I (Polytopic Control Tube): } 
For given $x \in \mathbb T(\beta)$, let $(y,v,y_{\sf s})$ be an associated minimizer of~\eqref{eq::EMM}. Hence,
\begin{align}
\label{eq:rhs_1}
m(x) = \| x \|_P^2 + \| y - y_{\sf s} \|_{\Gamma}^2 + \sum_{i=1}^{\vp} \| v_i - K V_i y \|_{\Theta}^2.
\end{align}
Proposition~\ref{prop::FCI} guarantees the existence of an input $u \in \mathbb U$ and a nonnegative vector
$\lambda \in \mathbb R^{\vp}$ with $\| \lambda \|_1 = 1$ satisfying
\begin{align}
\label{eq::xuaux}
x &= \sum_{i=1}^{\vp} \lambda_i V_i y, \quad
u = \sum_{i=1}^{\vp} \lambda_i v_i, \\
\label{eq::xpaux}
\text{and} \quad x^+ &:= Ax + Bu \in \Poly(y^+),
\end{align}
where $y^+$ is defined as in Lemma \ref{lem:T_beta}, according to which $(y^+,v^+,y_{\sf s})$ is feasible for~\eqref{eq::EMM} at $x^+ = Ax+Bu$, such that
\begin{align}
\label{eq::11}
\hspace{-3pt}
\scalemath{0.97}{\emm(x^+) \leq \| x^+ \|_P^2 + \| y^+ - y_{\sf s} \|_{\Gamma}^2 + \sum_{i=1}^{\vp} \| v_i^+ - K V_i y^+ \|_{\Theta}^2.}
\end{align}
We now bound the terms on the right-hand-side.
\\ \\ \noindent
\textit{Part II (Difference to LQR Control): }
To bound $\emm(x^+)$, we introduce the difference terms
\begin{align}
\label{eq::edef}
\forall i \in \{ 1, \ldots, {\sf v} \}, \quad
e_i \defeq v_i - K V_i y.
\end{align}
which are deviations of the vertex controls $v_i$ from the LQR controls at the vertices $V_i y$ of the polytope $\Poly(y)$. We define $(y^+,v^+,v_{\sf s})$ as in Lemma~\ref{lem:T_beta}. Now, we have
\begin{align}
e^+_i &:= v_i^+ -K V_i y^+ \notag \\
& \ = K V_i y_{\sf s} + \beta ( v_i-K V_i y_{\sf s}) - K V_i (y_{\sf s} +\beta (y-y_{\sf s})) \notag \\
\label{eq::ep}
& \ = \beta(v_i - K V_i y) = \beta e_i
\end{align}
for all $i \in \{1, \ldots, \vp \}$. Similarly, we find that
\begin{align}
\label{eq::yp}
y^+-y_{\sf s} &= y_{\sf s} + \beta(y-y_{\sf s}) - y_{\sf s} = \beta (y-y_{\sf s}). 
\end{align}
Hence, substituting~\eqref{eq::yp},~\eqref{eq::ep}, and~\eqref{eq::edef} in~\eqref{eq::11}, we have
\begin{align}
\label{eq::12}
\emm(x^+) &\leq \| x^+ \|_P^2 + \beta^2 \| y - y_{\sf s} \|_{\Gamma}^2 + \beta^2 \sum_{i=1}^{\vp} \| e_i \|_{\Theta}^2.
\end{align}
Additionally, we can substitute these relations in~\eqref{eq::xuaux} and~\eqref{eq::xpaux}, define $w := \sum_{i=1}^{\vp} \lambda_i e_i$ and write the input as
\begin{align}
u=Kx + w.
\label{eq::uaux2}
\end{align}
\\ \noindent
\textit{Part III (Properties of the Algebraic Riccati Equation): }
Due to \eqref{eq::uaux2}, we have $x^+=(A+BK)x+Bw$, such that
\begin{align}
\label{eq::first_term}
\| x^+ \|_P^2 &= x^\tr (A+BK)^\tr P (A+BK) x  \\
& \hspace{20pt} + 2x^\tr (A+BK)^\tr P B w + w^\tr B^\tr P B w.\nonumber
\end{align}
By substituting~\eqref{eq::uaux2}, the stage cost $\ell$ in \eqref{eq::stage_cost} satisfies
\begin{align}
\ell( x, u ) &{=} x^\tr ( Q + K^\tr R K + S K + K^\tr S^\tr )x \notag \\
& \hspace{60pt} + 2 x^\tr ( K^\tr R + S ) w + w^\tr R w.
\label{eq:stage_cost_term}
\end{align}
Hence, adding~\eqref{eq::first_term} and~\eqref{eq:stage_cost_term}, we have
\begin{align}
\| x^+ \|_P^2 + \ell(x, u ) &= x^\tr (A+BK)^\tr P (A+BK) x \notag \\
&  + x^\tr \left( Q + K^\tr R K + S K + K^\tr S^\tr \right) x + 
\notag \\
& + 2 x^\tr ( (A+BK)^\tr P B + K^\tr R + S ) w \notag \\
\label{eq::aux232}
& + w( B^\tr P B + R ) w.
\end{align}
After substituting~\eqref{eq::DARE}, the cross-terms cancel out and the expression simplifies to
\begin{align}
\label{eq::SumEquation}
\| x^+ \|_P^2 + \ell(x, u ) = x^\tr P x + w^\tr (B^\tr P B + R) w.
\end{align}
\\ \noindent
\textit{Part IV (Trace Inequality): }
Next, we bound the term $w^\tr (B^\tr P B + R) w$. Since $w = \sum_{i=1}^{\vp} \lambda_i e_i$, we observe that $w^\tr (B^\tr P B + R) w=\lambda^\tr \Delta \lambda,$ where $\Delta$ is given by
\[
\Delta := \left(
\begin{array}{ccc}
e_1^\tr (B^\tr P B + R) e_1 & \ldots & e_{1}^\tr (B^\tr P B + R) e_{\vp} \\
\vdots & \ddots & \vdots \\
e_{\vp}^\tr (B^\tr P B + R) e_{1} & \ldots & e_{\vp}^\tr (B^\tr P B + R) e_{\vp} \\
\end{array}
\right).
\]
Since $\Delta$ is  a symmetric and positive semi-definite matrix, it satisfies a general trace inequality for positive semi-definite matrices~\citep{Shebrawi2013},
\[
\lambda^\tr \Delta \lambda=\mathrm{Tr}(\Delta \lambda \lambda^\tr) \ \leq \ \mathrm{Tr}(\Delta) \mathrm{Tr}(\lambda \lambda^\tr) = \mathrm{Tr}(\Delta) \| \lambda \|_2^2.
\]
Since our assumptions ensure that $\lambda\geq 0$ and $\|\lambda\|_1=1$, we have $\| \lambda \|_2^2\leq 1$.
Hence, it follows that
\begin{align}
\label{eq::Tbound}
w^\tr (B^\tr P B + R) w \leq \mathrm{Tr}(\Delta)
\leq (1-\beta^2) \sum_{i=1}^{\vp} \| e_i \|_{\Theta}^2,
\end{align}
where the last inequality follows from \eqref{eq::Tcond}.
\\ \\ \noindent
\textit{Part V (Lyapunov Descent): }
Collecting the inequalities,
\begin{align}
\label{eq::TightLyapunov}
& \emm(x^+) + \ell(x,u) \\ 
& \hspace{10pt} \leq \| x^+ \|_P^2 + \ell(x,u) + \beta^2 \| y - y_{\sf s} \|_{\Gamma}^2 + \beta^2 \sum_{i=1}^{\vp} \| e_i \|_{\Theta}^2 \nonumber  \\
& \hspace{10pt} \leq \emm(x) - (1-\beta^2)\| y - y_{\sf s} \|_{\Gamma}^2,\nonumber
\end{align}
where the first inequality follows from \eqref{eq::12}, and the second from \eqref{eq::SumEquation}, \eqref{eq::Tbound} and \eqref{eq:rhs_1}. Since $(1-\beta^2)\| y - y_{\sf s} \|_{\Gamma}^2 \geq 0$, we have $\emm(x^+)+\ell(x,u) \leq \emm(x)$ concluding the proof.
%
\end{pf}

\section{Complete formulation}
\label{sec:complete_formulation}
%
%
%
%


Using the proposed terminal region $\mathbb T(\beta)$ and terminal cost $\emm$, we formulate the OCP
\begin{align}
\label{eq:mpc_new}
&\min_{x,u,y,v,y_{\sf s}} \scalemath{0.95}{\sum^{N-1}_{k=0} \ell(x_k,u_k) + \underbrace{
\begin{pmatrix}
    &\hspace{-50pt}  \|x_N\|_P^2+\|y-y_{\sf s}\|_{\Gamma}^2  \vspace{3pt}\\
    &\hspace{50pt} +\underset{i=1}{\overset{\sf v}{\sum}} \Vert v_i - KV_iy \Vert_{\sf {\Theta}}^2
\end{pmatrix}
}_{\text{\sc terminal cost}}} \nonumber \\
&\hphantom{_{x,y}}\text{s.t.}   \left\{ 
\begin{aligned}
&\scalemath{0.95}{\forall k\in\{0,\ldots,N-1\},} \\
& \scalemath{0.95}{x_{k+1} = Ax_{k} + Bu_{k}, \ x_{k+1} \in \mathbb{X}, \ u_{k}\in\mathbb{U},}\\
\smallskip
& \scalemath{0.95}{Fx_{N}\leq y, \ Ey \leq 0, \ E y_{\sf s} \leq 0,  \ \forall i \in \{1,\ldots,{\sf v}\},} \\
&\scalemath{0.95}{F(AV_iy + B U_i v)\leq y_{\sf s} + \beta (y-y_{\sf s}),}\\
&\scalemath{0.95}{F(A + BK) V_i y_{\sf s} \leq y_{\sf s},} \\
&\scalemath{0.95}{V_{i}y\in\mathbb{X}, \ v_i \in \mathbb{U}, \ V_i y_{\sf s} \in \mathbb{X}, \ K V_i y_{\sf s} \in \mathbb U, }
\end{aligned}
\right.
\end{align}
where $x = (x_1,\ldots,x_N)$ and $u = (u_0,\ldots,u_{N-1})$ denote the optimization variables. The initial value $x_0$ is a parameter that is reset to the current state measurement in each iteration. We denote the optimizers of Problem \eqref{eq:mpc_new} as $(x^*(x_0),u^*(x_0),y^*(x_0),v^*(x_0),y_{\sf s}^*(x_0))$, and the associated control law as $\mu_N(x_0)=u_0^*(x_0)$.  
If $\mathbb X$ and $\mathbb U$ are polyhedra,~\eqref{eq:mpc_new} is a convex quadratic programming (QP) problem that can be solved online by using existing MPC tools.

Given that Assumptions~\ref{ass::sets},~\ref{ass::weights},~\ref{ass::LQR}, and~\ref{ass::FEV} hold, the MPC controller~\eqref{eq:mpc} is recursively feasible if the domain of $\emm$ is control invariant and asymptotically stable if $\emm$ satisfies~\eqref{eq::Lyapunov}, see~\citep{Rawlings2009}. As such, the following statements result from Lemma~\ref{lem:T_beta} and Theorem~\ref{thm::clf}.

\begin{enumerate}

\item If $\Gamma$ and $\Theta$ are positive semi-definite, and $\beta \in [0,1]$, then~\eqref{eq:mpc_new} defines a recursively feasible MPC scheme. 

\item For given $\beta \in [0,1)$, if $\Gamma$ and $\Theta$ satisfy \eqref{eq::Tcond}, then~\eqref{eq:mpc_new} defines a recursively feasible and asymptotically stable MPC control scheme on its admissible set.

\end{enumerate}

\subsection{Computational Complexity}
The online optimization problem~\eqref{eq:mpc_new} has $N (n_x + n_u) + 2 \f + \vp n_u$
optimization variables and $N (n_x + n_{\mathbb U} + n_{\mathbb X}) + \f(1+2\vp) + 2 {\sf e} + 2\vp ( n_{\mathbb U} + n_{\mathbb X} )$ constraints, assuming that the given constraint sets $\mathbb X$ and $\mathbb U$ can be represented by $n_{\mathbb X}$ and $n_{\mathbb U}$ convex constraints, respectively. The additional cost from the terminal constraint depends on the number $\f$ of facets, the number $\ed$ of edges, and the number $\vp$ of the cc-polytopes that can be represented by the triple $(F,E,V)$, with the number of variables increasing linearly in $\f$ and $\ve$, and constraints quadratically in $\f \ve$. By designing the feedback gain $K$ such that $x^+=(A+BK)x$ admits low-complexity invariant sets, e.g., \citep{Tahir2013}, this complexity can be reduced.
As the user chooses this triple, we have full control over the computational complexity of the terminal region. Generally, cc-polytopes with larger numbers of facets, edges, and vertices tend to be more flexible, which leads to larger terminal regions. Methods aimed at designing cc-polytopes that balance these attributes can be found in \citep{badalamenti2025efficientconfigurationconstrainedtubempc}.

\section{Numerical Case Studies}
\label{sec:numerics}
\begin{figure}
    \centering
    \includegraphics[width=0.9\linewidth, trim=0.cm 0.cm 0.cm 0.cm, clip]{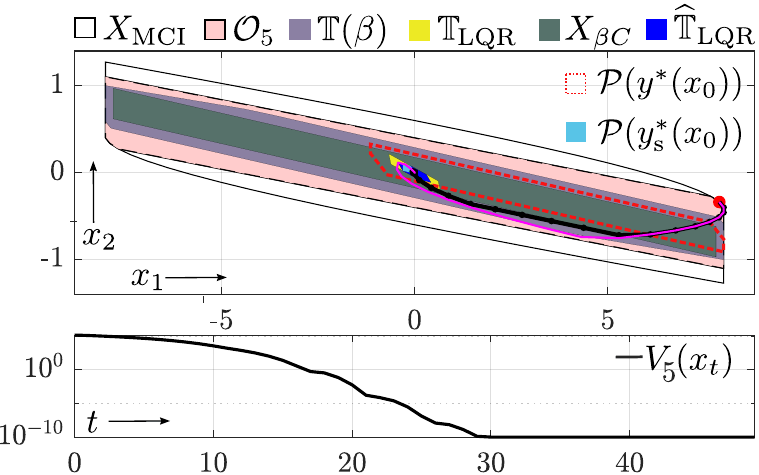}
    \caption{(\textit{Top}) State-space of \eqref{eq:example_1}, along with closed-loop trajectory with $u_t=\mu_5(x_t)$ in black, and infinite horizon solution in magenta, from initial state indicated by the red dot; (\textit{Bottom}) Lyapunov function $V_5(x_t)$.}
    \label{fig:Example_1}
\end{figure}
\begin{figure}
    \centering
    \includegraphics[width=1.0\linewidth, trim=-0.6cm 0.cm 0.cm 0.cm, clip]{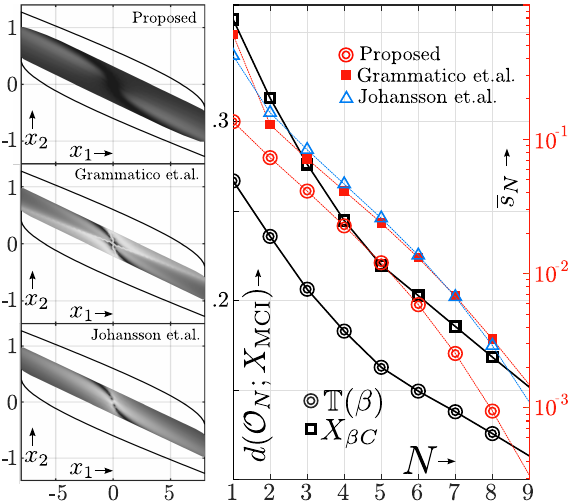}
    \caption{(\textit{Right}) Comparison of admissible region size and suboptimality with benchmark approaches. The black lines refer to axis on the left, with the value denoting the Hausdorff distance between the maximal control invariant set $X_{\mathrm{MCI}}$ and admissible set $\mathcal{O}_N$ of the MPC scheme with horizon length $N$. The axis on the right denotes suboptimality of closed-loop performance of the MPC schemes averaged over $3000$ samples from the admissible region $\mathcal{O}_N$; (\textit{Left}) Distribution of $s_1(x)$ over $x \in \mathcal{O}_1$. Dark colors indicate smaller values.}
    \label{fig:Example_11}
\end{figure}

We present two numerical examples to validate the approach. The first illustrates the method and compares it against state-of-the-art schemes; the second applies it to regulate {\color{black} a stiff LTI model based on the} Klatt–Engel reactor \citep{Klatt1998}. In both cases, we set $\Gamma$ to the identity matrix, and $\Theta=(B^\tr P B+R)/(1-\beta^2)$.
Whenever Problem \eqref{eq:mpc} uses the terminal cost in \eqref{eq::emmLQR} with $\mathbb{T}_{\mathrm{LQR}}$ as the maximal control-invariant set under $u=Kx$, we refer to the resulting scheme as a nominal MPC scheme. For suboptimality analysis, we solve Problem \eqref{eq::ocp} with horizon $N=500$ and terminal constraint $x_{500}=0$, and denote the corresponding trajectory and cost as the optimal infinite-horizon solution and $V_{\infty}(\cdot)$, respectively.
The QP problems are solved using the Gurobi solver in MATLAB R2024b.

\subsection{Illustrative example}
We consider the unstable system
\begin{align}
\label{eq:example_1}
    x^+ = \begin{bmatrix} 1.1 & 2.0 \\ 0 & 0.95 \end{bmatrix} x+ \begin{bmatrix} 0 \\ 0.0787 \end{bmatrix} u
\end{align}
presented in \citep{johansson2024}, subject to constraints $\mathbb{X}=\{x | \|x\|_{\infty} \leq 8 \}$ and $\mathbb{U}=\{u|\|u\|_{\infty}\leq 1\}$. Setting $Q=100I_2$ and $R=1$ yields $K=[-4.6128 \ -18.8646]$. We construct the template matrix $F \in \R^{6 \times 2}$ such that $\Poly(1)$ is the convex hull of the $6$-largest simplicial partitions of the maximal $\lambda$-contractive set \citep{Blanchini2008} with $\lambda=0.95$.
Then, $(F,E,V)$ satisfies Assumption~\ref{ass::FEV}  with $\sf f,\ve,\sf e=6$. Finally, we select $\beta=0.95$ to formulate Problem \eqref{eq:mpc_new}. In Figure~\ref{fig:Example_1}-Top, we plot the maximal control invariant set $X_{\mathrm{MCI}}$, along with the terminal region $\mathbb{T}(\beta)$ in \eqref{eq::Tbeta}. The set $\mathcal{O}_5$ is the admissible set of our MPC scheme with $N=5$.
We also plot the maximal control invariant set $\mathbb{T}_{\mathrm{LQR}}$ with $u=Kx$, and $\widehat{\mathbb{T}}_{\mathrm{LQR}}$ in \eqref{eq::hatTLQR}. Since $\widehat{\mathbb{T}}_{\mathrm{LQR}}$ is the union of all control invariant cc-polytopes with $u=Kx$, $\widehat{\mathbb{T}}_{\mathrm{LQR}} \subseteq \mathbb{T}_{\mathrm{LQR}}$ holds. We plot the closed-loop state trajectory $x_t$ under the MPC law $u_t=\mu_5(x_t)$, initialized at $x_0=(7.8875,-0.3386)$, along with the set $\mathcal{P}(y^*(x_0))$ which satisfies $x^*_5(x_0) \in \mathcal{P}(y^*(x_0))$. This set defines the start of a control invariant tube, which converges to $\mathcal{P}(y_{\sf{s}}^*(x_0)) \subseteq \widehat{\mathbb{T}}_{\mathrm{LQR}}$. Recall from Lemma \ref{lem:T_beta} that this tube sequence satisfies $y_{k+1}= y_{\sf{s}}^*(x_0)+\beta(y_k-y_{\sf{s}}^*(x_0))$, with $y_0=y^*(x_0)$. We also plot the infinite-horizon state trajectory.
In Figure \ref{fig:Example_1}-Bottom, we plot the cost $V_5(x_t)$ of Problem \eqref{eq:mpc_new}, which exhibits a Lyapunov decrease ( Section~\ref{sec:complete_formulation}). To analyze suboptimality, we define
\begin{align} 
\label{eq:suboptimality_measure}
\hspace{-5pt}
s_N(x_0)=\left(\sum_{t=0}^M \ell(x_t,\mu_N(x_t))-V_{\infty}(x_0)\right)/ {V_{\infty}(x_0)}.
\end{align}
For $M=50$, the simulation results in $s_N(x_0)=0.0158$.

\textit{Benchmark Comparisons:}
We compare our approach against \citep{Grammatico2013} and \citep{johansson2024} regarding admissible region size and suboptimality, formulating the benchmarks using the maximal $\beta$-contractive cc-polytope $X_{\beta C}$. Figure \ref{fig:Example_11} (right) shows that using $\mathbb{T}(\beta)$ yields a smaller Hausdorff distance $d(\mathcal{O}_N;X_{\mathrm{MCI}})$ for all $N \in [1,9]$, hence a larger admissible region. Our method also achieves the lowest average suboptimality $\overline{s}_N$ (evaluated over 3000 samples in $\mathcal{O}_N$ with $M=50$); Figure \ref{fig:Example_11} (left) plots the $s_1(x)$ distribution over $\mathcal{O}_1$, where darker colors indicate lower values. While our average QP solution time increases slightly (0.005–0.006 s vs. 0.0035–0.004 s), this trade-off secures a larger region with substantially reduced suboptimality. Even if benchmarks are reformulated with the maximal $\beta$-contractive invariant set to enlarge their admissible regions ($d(\mathcal{O}_1)=0.164$ vs. $0.266$), our suboptimality remains far lower ($\bar{s}_1=0.1354$, compared to $0.7042$ for \citep{Grammatico2013} and $0.5360$ for \citep{johansson2024}). Finally, our template polytope formulation remains practical for systems with high-complexity maximal control invariant sets, where benchmark approaches often become intractable.

\textit{Effect of $\beta$:} We briefly discuss the effect of the $\beta \in [0,1)$, used in the formulation of the terminal set $\mathbb{T}(\beta)$ in \eqref{eq::Tbeta}, and the terminal cost matrix $\Theta$ in \eqref{eq::Tcond}, on the admissible set and suboptimality. As $\beta \to 1$, we have $y_{\sf s}+\beta(y-y_{\sf s}) \to y$, relaxing the contraction requirement and typically enlarging the admissible region.
For the current example, as $\beta \to 1$, the size of the admissible region increases. We obtain $d(\mathcal{O}_1)$ values of $\{6.7116,6.4971,6.0446,4.6539,0.3930,0.1913\}$ for $\beta$ corresponding to $\{0.1,0.3,0.5,0.7,0.9,0.999\}$ respectively. However, since $\mathbb{T}(\beta)$ is nonlinear in $\beta$, this monotonicity is not guaranteed. With regards to suboptimality, small values of $\beta$ result in small $\Theta$. Then, as per \eqref{eq::EMM}, the vertex controls $v_i$ are encouraged to be different from the LQR inputs $KV_iy$. Since this is desirable outside the maximal positive invariant set, the resulting solution might exhibit reduced suboptimality. Instead, as $\beta \to 1$, then $\|v_i-KV_iy\|_{\Theta}^2$ is so strongly penalized that it tends to a constraint, resulting in increased suboptimality.  For the example, from $x_0=(4.5159,-0.7044)$ and $N=5$, we obtain $s_N(x_0)$ values of $\{0.0235,0.0195,0.0253,0.0468,0.0554,0.0570\}$ for $\beta$ values of $\{0.5,0.6,0.7,0.8,0.9,0.999\}$. While suboptimality generally increases with $\beta$, the trend is non-monotonic due to the nonlinear dependence of optimizers on $\beta$.

\subsection{Continuous stirred tank reactor}
\begin{figure}
    \centering
    \includegraphics[width=1\linewidth, trim=0.cm 0.cm 0.cm 0.cm, clip]{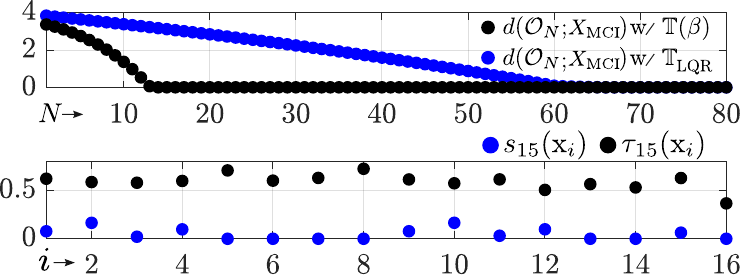}
    \caption{(\textit{Top}) Comparison of admissible region size; (\textit{Bottom}) Comparison of suboptimality and solution time.}
    \label{fig:Example_21}
\end{figure}
\begin{figure}
    \centering
    \includegraphics[width=1\linewidth, trim=0.cm 0.cm 0.cm 0.cm, clip]{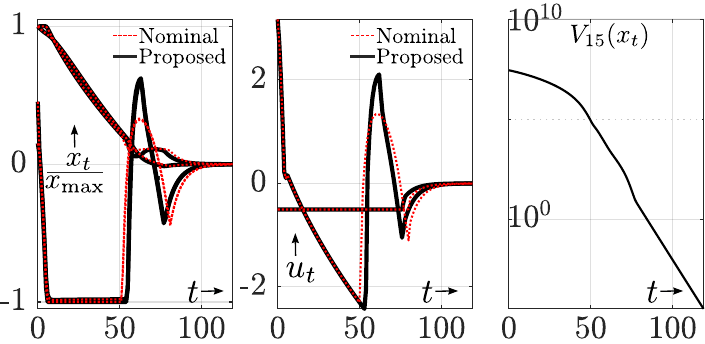}
    \caption{(Left) Normalized closed-loop state trajectories with $N=15$; (Center) Corresponding input trajectories; (Right) Lyapunov function.}
    \label{fig:Example_22}
\end{figure}
We consider a model of the Klatt-Engel reactor presented in \citep{Klatt1998}, which is a jacketed continuous stirred tank reactor with dynamics in \citep[Equation 19]{Klatt1998}, states the concentrations $c_A$ mol/l and $c_B$ mol/l of the reactants, reactor temperature $\nu$ $^\circ$C, and coolant temperature $\nu_K$ $^\circ$C, and the inputs are normalized inflow $u_1$ 1/h and heat removal rate of the coolant $u_2$ MJ/h. We linearize the plant about {\color{black}$(c_A,c_B,\nu,\nu_K)=(0.325,0.321,199.92,194.73)$} and $(u_1,u_2) = (18.83,-4.4957)$, and discretize using the explicit
Runge-Kutta-4 scheme with timestep $0.001$ s, resulting in
the deviation being captured by an LTI system with
{\color{black}\begin{align*}
    [A | B] = 
\scalemath{0.9}{
\left[
\begin{array}{cccc|cc}
  0.2809  & -0.0321  & -0.0098 &  -0.0002  &     0.0124  & -0.0000 \\
    0.0994 &   0.2818  & -0.0081  & -0.0000 & 0.0099  & -0.0000 \\
    1.7800  &  2.7640  &  1.0066  &  0.0295 & -0.0596  &  0.0015 \\
    0.0011  &  0.1629 &   0.0829  &  0.9182 &  -0.0020  &  0.0958
\end{array}
\right]
}.
\end{align*}}
{\color{black} With eigenvalues of $(0.28, 0.34, 0.87, 0.98)$, the system exhibits noticeable stiffness due to a distinct timescale separation: the fast dynamics are driven by the concentration deviations, whereas the slow dynamics dictate the temperature deviations.}
The constraints are $\mathbb{X}=[-x_{\mathrm{max}},x_{\mathrm{max}}]$ with $x_{\mathrm{max}}=(0.05,0.05,4,4)$, and $\mathbb{U}=[-6.83,3.17] \times [-0.5043,0.5057]$. For synthesizing an MPC controller, we first synthesize a feedback gain $K$ along with a matrix $F \in \R^{5 \times 4}$ such that $\mathcal{P}(1)$ is invariant under the control law $u=Kx$ by adapting the procedure in \citep{Mulagaleti2025_PD}. These matrices are
\begin{align*}
   \begin{bmatrix} K \\  \hline F \end{bmatrix} = 
\scalemath{0.75}{
\begin{bmatrix*}[r]
 -36.5225 & -39.3227  & -6.2543  & -0.1072 \\
   -7.3650 & -11.0234  & -3.3091 &  -9.6109 \\  \hline
 -1084.7 & 54 & -30.8 & 170.3 \\
 -322.3 & -1826.6 & -104.6  & 124.6 \\
    -28.0 & -44.1 & -15.1  & -71.1 \\
  14.6 & 21.9 &   6.6  &  19.1 \\
  1376.6 & 1729.3 & 124.3  &  -300.0
\end{bmatrix*}.
}
\end{align*}
We compute the stage cost matrices satisfying Assumptions~\ref{ass::weights} and~\ref{ass::LQR} by solving the inverse LQR problem in \citep[Section III]{Zanon2022}. For this matrix $F$, we compute the matrices $V$ and $E \in \R^{ 1 \times 4}$ such that $(F,E,V)$ satisfies Assumption \ref{ass::FEV}. Using this template, we synthesize Problem \eqref{eq:mpc_new} with $\beta=0.99$.
{\color{black}In Figure \ref{fig:Example_21}-Top, we plot the Hausdorff distance $d(\mathcal{O}_N;X_{\mathrm{MCI}})$.  We report that $d(\mathcal{O}_N;X_{\mathrm{MCI}})=  0.$ for all $N \geq 15$. We compare this distance against that of a nominal MPC scheme, for which $d(\mathcal{O}_N;X_{\mathrm{MCI}})$ converges to $0$ only after $N\geq 62$. }
Hence, our approach results in a much larger admissible region for the same $N$. Equivalently, to achieve the same region of attraction, the nominal MPC scheme would contain $372$ optimization variables and $757$ inequality constraints, while the proposed approach contains $110$ optimization variables and $357$ inequality constraints.
To analyze suboptimality, we project each vertex $\{\mathrm{z}_i, i \in \{1,\cdots,16\}\}$ of $\mathbb{X}$ onto $\mathcal{O}_{15}$ as $\mathrm{x}_i := \arg\min_{x \in \mathcal{O}_{15}} \|x-\mathrm{z}_i\|_2^2.$
Setting the initial state $x_0=\mathrm{x}_i$, we simulate the system with $u=\mu_{15}(x)$ for $M=300$ timesteps. These initial states are also feasible for the nominal MPC scheme for $N= 62$, and its solution matches the approximate infinite horizon solution. In Figure~\ref{fig:Example_21}-Bottom, we plot the suboptimality metric $s_{15}(\mathrm{x}_i)$ over $M=300$ steps. The solution time per iteration of the nominal MPC scheme with $N=62$ is {\color{black}$0.006$s}, and for our proposed scheme with $N=15$ it is {\color{black}$0.0038$s}. Denoting the average solution time from the initial state $x_0=\mathrm{x}_i$ over $M=300$ steps for the nominal MPC scheme by $\tau_{62}^1(\mathrm{x}_i)$ and that of ours by $\tau^2_{15}(\mathrm{x}_i)$, we define $\tau_{15}(\mathrm{x}_i):=(\tau_{62}^1(\mathrm{x}_i)-\tau^2_{15}(\mathrm{x}_i))/\tau_{62}^1(\mathrm{x}_i).$
In Figure~\ref{fig:Example_21}-Bottom, we plot $\tau_{15}(\mathrm{x}_i)$ over the different initial states. We observe that consistently, the suboptimality induced by our scheme is compensated for by the amount of time saved. {\color{black} Over these initial states, we report that our scheme induces on average $5\%$ suboptimality while saving $67.6\%$ computational time,}
thus validating its usage to balance computational complexity and suboptimality. {\color{black} In Figure \ref{fig:Example_22}, we plot the closed-loop state (normalized) and input trajectories from $x_0=(0.0227,0.0075,3.9845,3.9995)$. During the initial steps ($t<50$) when the stage costs dominate, the trajectories are identical to the nominal case. Once the system gets near the terminal set $\mathbb{T}(\beta)$ ($t=56$), the terminal costs dominate, resulting in deviation of the trajectories, and consequently, suboptimality. While this suboptimality can be minimized by increasing the complexity of the cc-polytope, it might lead to increased complexity. We also plot the Lyapunov function, which validates Theorem \ref{thm::clf}.}

\section{Conclusions}
This paper has presented a novel approach for designing terminal ingredients to stabilize MPC schemes. Specifically, a control-invariant terminal set has been constructed using new results from polytopic computation, leading to an enlarged region of attraction, as established in Lemma~\ref{lem:T_beta}. Furthermore, on this set, a piecewise quadratic control Lyapunov function has been constructed, whose values match the optimal LQR cost within a nontrivial neighborhood of the origin, as shown in Theorem~\ref{thm::clf}. 
These ingredients enable the proposed QP-based MPC scheme to achieve a significantly enlarged region of attraction and reduced suboptimality relative to the infinite-horizon optimal control problem, as validated through numerical examples. Future research directions include extending the scheme to economic objectives and uncertain systems.

\bibliographystyle{ifacconf}
\bibliography{ifacconf}

\section{Appendix}
\subsubsection{Proof of Proposition \ref{prop::emm}:}
The first and third statements follow standard result on parametric convex optimization and parametric QPs~\citep{Rockafellar1970,Bemporad2002}. The second statement follows since the constraints in~\eqref{eq::EMM} match the definition of $\mathbb T(\beta)$ in~\eqref{eq::Tbeta}. For the fourth statement, suppose that $(y,v,y_\mathrm{s})$ is a minimizer of~\eqref{eq::EMM} for a given $x \in \widehat{\mathbb T}_{\sf LQR}$. This minimizer exists since $\widehat{\mathbb T}_{\sf LQR} \subseteq \mathbb T(\beta)$ from Lemma~\ref{lem:T_beta}. Next, we define $v_\mathrm{s}$ as in~\eqref{eq::vs}. Since $x \in \widehat{\mathbb T}_{\sf LQR}$, $(y_{\sf s},v_{\sf s},y_{\sf s})$ is feasible for \eqref{eq::EMM}, the inequalities
\begin{align*}
&\hspace{-10pt}\| x \|_P^2 = \Vert x \Vert_{P}^{2} + \Vert y_{\sf s}-y_{\sf s} \Vert_{\Gamma}^2 + \sum^{\sf v}_{i=1}\Vert (v_{\sf s})_i - KV_iy_{\sf s} \Vert_{\Theta}^2 \notag \vspace{-30pt} \\
&\hspace{50pt}\leq \Vert x \Vert_{P}^{2} + \Vert y-y_{\sf s} \Vert_{\Gamma}^2 + \sum^{\sf v}_{i=1}\Vert v_i - KV_iy \Vert_{\Theta}^2, \notag
\end{align*}
follow for any feasible $(y,v,y_{\sf s})$, because $\Gamma,\Theta\succeq 0$. The latter inequality implies that we must have $\emm(x) = \| x \|_P^2$ for all $x \in \widehat{\mathbb T}_{\sf LQR}$, since $(y,v,y_{\sf s})$ is a feasible minimizer. This completes the proof of the proposition.

\end{document}